\documentclass{amsart}
\usepackage{times, stmaryrd, verbatim}
\usepackage{ amssymb, mathrsfs} 

\newtheorem{Thm}{Theorem}

\newtheorem{Prop}{Proposition}[section]

\newtheorem{rmk}{Remark}[section]

\numberwithin{equation}{section}

\newcommand{\mr}{\mathrm}

\newcommand{\tr}{\mr{tr}\,}
\newcommand{\mrd}{\mathrm{d}}

\usepackage{hyperref}
\begin{document}

\title{On the trajectories of $\mr O(1)$-Kepler Problems}
\author{Guowu Meng}

\address{Department of Mathematics, Hong Kong Univ. of Sci. and
Tech., Clear Water Bay, Kowloon, Hong Kong}

\email{mameng@ust.hk}
\thanks{The author was supported by the Hong Hong Research Grants Council under RGC Project No. 16304014 and the Hong Kong University of Science and Technology under DAG S09/10.SC02.}


\date{\today}


\maketitle
\begin{abstract}
The trajectories of the $\mathrm{O}(1)$-Kepler problem at level $n\ge 2$ are completely determined. It is found in particular that a non-colliding trajectory is an ellipse, a parabola or a branch of hyperbola according as the total energy is negative, zero or positive. Moreover, it is shown that the group $\mathrm{GL}(n, \mathbb R)/\mathrm{O}(1)$ acts transitively on both the set of oriented elliptic trajectories and the set of oriented parabolic trajectories. The method employed here is similar to the one used by Levi-Civita in the study of planar Kepler problem in 1920.
\end{abstract}


\section {Introduction}
The classical $\mathrm{O}(1)$-Kepler problem at level $n\ge 2$ is a generalized Kepler problem \cite{meng2011} whose configuration space is the space $\mathcal C_1$ of rank one semi-positive elements in the euclidean Jordan algebra $\mathrm H_n(\mathbb R)$ of real symmetric matrices of order $n$, with Lagrangian function
\begin{eqnarray}\label{Lagrangian}
L={1\over 2}||\dot x||^2+{n\over \tr x}.
\end{eqnarray} 
Here $\tr x$ is the trace of $x$, hence always positive because $x\in \mathcal C_1$. The length square of the velocity vector $\dot x$ is not calculated with the euclidean structure on $\mathrm H_n(\mathbb R)$. To describe it, we note that the tangent space $T_x\mathcal C_1$ is the subspace $\{x\}\times \mathrm{Range}\, L_x$ of $T_x\mathrm{H}_n(\mathbb R)$ where $L_x$ is the Jordan multiplication by $x$, i.e., $L_xy={1\over 2}(xy+yx)$ --- the symmetrized matrix product of $x$ with $y$. We also note that $L_x$: $\mathrm{H}_n(\mathbb R)\to \mathrm{H}_n(\mathbb R)$ maps $\mathrm{Range}\, L_x$ isomorphically onto $\mathrm{Range}\, L_x$, so we have an automorphism $\bar L_x$ of $T_x\mathcal C_1$. (Both of these two statements can be easily verified by assuming that $x$ is in diagonal form.) By definition
\begin{eqnarray}
||\dot x||^2= {1\over n}\tr x\; \langle \dot x, \bar L_x^{-1} (\dot x) \rangle
\end{eqnarray} where $\bar L_x^{-1} $ is the inverse of $\bar L_x$, and $\langle\, , \, \rangle$ is the inner product on $T_x\mathrm{H}_n(\mathbb R)$: for $(x, u)$, $(x, v)$ in $T_x\mathrm{H}_n(\mathbb R)$, we have 
\begin{eqnarray}
\langle (x, u), (x, v) \rangle = {1\over n}\tr (uv)
\end{eqnarray} 
where $uv$ is the matrix (or Jordan) multiplication of $u$ with $v$. Note that the $\mathrm{O}(1)$-Kepler problem at level $2$ is just the planar Kepler problem.

\vskip 10pt
The bound state analysis for $\mr O(1)$-Kepler problems has been done in Ref. \cite{meng2008}. Here we analyze the trajectories of $\mr O(1)$-Kepler problems. We shall show that a trajectory is always the intersection of $\mathcal C_1$ with a real plane inside $\mathrm{H}_n(\mathbb R)$, consequently, since 
\[
\mathcal C_1 =\{x\in \mathrm{H}_n(\mathbb R)\mid x^2=\tr x\, x, \; \tr x>0\}
\]
a trajectory must be a quadratic plane curve. In fact, we shall show that a non-colliding trajectory is an ellipse, a parabola or a branch of a hyperbola according as the total energy 
\begin{eqnarray}
E={1\over 2}||\dot x||^2-{n\over \tr x}
\end{eqnarray} is negative, zero or positive. It will also be shown that the group $\mathrm{GL}(n, \mathbb R)/\mathrm{O}(1)$ acts transitively on both the set of elliptic trajectories and the set of parabolic trajectories. (Here $\mathrm{GL}(n, \mathbb R)/\mathrm{O}(1)$ is the quotient group of $\mathrm{GL}(n, \mathbb R)$ by the normal subgroup consisting of the identity matrix of order $n$ and its negative.) Note that, a trajectory is the path traced by a motion, so it is oriented by the velocity of the motion. 

\begin{rmk}
The analogue of Laplace-Runge-Lenz vector for the classical $\mathrm{O}(1)$-Kepler problem at level $n$ exists and can be obtained via a natural Poisson representation of $\mathfrak{sp}_{2n}(\mathbb R)$ on $T^*\mathcal C_1$. For more details, please consult \cite[Subsection~4.1]{meng2014}.
\end{rmk}

\subsection{Notations} If $u$, $v$ are vectors in $\mathbb R^n$, we use $u\cdot v$ to denote the dot product of $u$ with $v$, $u^2$ to denote $u\cdot u$. If $x$ is a square matrix, we use $\tr x$ to denote the trace of $x$, and $x^2$ to denote the matrix multiplication of $x$ with $x$. 

\section{Solving equation of motion}
The equation of motion for the planar Kepler problem was ingeniously solved in Ref. \cite{Levi-Civita1920} by transforming the nonlinear equation of motion into a linear ordinary differential equation (ODE). This transformation, referred to as the \emph{Levi-Civita transformation} in the literature, is based on the quadratic map: $z\in \mathbb C \mapsto z^2\in \mathbb C$.

We shall use an analogous idea to solve the equation of motion for $\mr O(1)$-Kepler problems. The analogous transformation that we shall use, which turns the equation of motion into a linear ODE, is based on the following quadratic map
\begin{eqnarray}
\begin{array}{lccc}
q: & \mathbb R^n &\to & \mathrm H_n(\mathbb R)\\
& X&\mapsto & n\,XX^{\mathrm t}
\end{array}
\end{eqnarray} where $X^{\mathrm t}$ is the transpose of the column vector $X$ and $XX^{\mathrm t}$ is the matrix multiplication of $X$ with $X^{\mathrm t}$. The mapping $q$, when restricted to $\mathbb R^n_*:=\mathbb R^n\setminus\{\mathbf 0\}$, becomes a two-to-one covering map onto $\mathcal C_1$:
\begin{eqnarray}\label{quadraticmap}
\bar q: \quad \mathbb R^n_*\buildrel 2:1\over\longrightarrow \mathcal C_1.
\end{eqnarray}
Consequently the tangent map $T\bar q$:\, $T\mathbb R^n_*\to T\mathcal C_1$ is also a two-to-one covering map.

\begin{Prop}\label{Ltilde}
The composition of Lagrangian $L$ in Eq. \eqref{Lagrangian} with the tangent map $T\bar q$, denoted by $\tilde L$:  $T\mathbb R^n_*\to \mathbb R$, 
is
\[
\tilde L =  2X^2\dot X^2+{1\over X^2}
\] where $X^2=X\cdot X$ and $\dot X^2=\dot X\cdot \dot X$. Also, $\tilde E:=E\circ T\bar q$: $T\mathbb R^n_*\to \mathbb R$ is given by
\[
\tilde E =  2X^2\dot X^2-{1\over X^2}.
\] 
\end{Prop}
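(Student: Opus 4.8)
The plan is to pull everything back along the explicit parametrization $x = nXX^{\mathrm t}$ and to reduce each ingredient of $L$ and $E$ to a polynomial in $X$ and $\dot X$. Since $L$ and $E$ share the kinetic term $\frac{1}{2}\|\dot x\|^2$ and differ only in the sign of the potential, it suffices to evaluate the kinetic term and the potential term separately. The potential term is immediate: because $\tr(XX^{\mathrm t}) = X^{\mathrm t}X = X^2$, we get $\tr x = nX^2$, hence $\frac{n}{\tr x} = \frac{1}{X^2}$, which already accounts for the second summand in both $\tilde L$ and $\tilde E$. Thus the whole problem collapses to proving $\frac{1}{2}\|\dot x\|^2 = 2X^2\dot X^2$.

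Differentiating the parametrization gives $\dot x = n\,u_0$, where $u_0 := \dot X X^{\mathrm t} + X\dot X^{\mathrm t}$. The heart of the argument is to evaluate $\|\dot x\|^2 = \frac{1}{n}\tr x\,\langle \dot x, \bar L_x^{-1}(\dot x)\rangle$, which requires inverting $\bar L_x$ on $\dot x$. I would look for $\bar L_x^{-1}(\dot x)$ inside the two-dimensional subspace spanned by $w := XX^{\mathrm t}$ and $u_0$; both lie in $\mathrm{Range}\,L_x$, since $x\in\mathcal C_1$ gives $L_x x = x^2 = (\tr x)\,x$ (so $w = \frac{1}{n}x$ is in the range) and $\dot x = n u_0$ is tangent to $\mathcal C_1$. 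A direct computation using $X^{\mathrm t}X = X^2$ and $X^{\mathrm t}\dot X = X\cdot\dot X$ should yield $L_x w = nX^2 w$ and $L_x u_0 = n(X\cdot\dot X)\,w + \frac{nX^2}{2}u_0$, so that $L_x$ is triangular on this subspace and the linear system for the coefficients solves at once, giving $\bar L_x^{-1}(\dot x) = \frac{2}{X^2}u_0 - \frac{2(X\cdot\dot X)}{(X^2)^2}w =: u$.

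Finally, I would substitute this into the inner product. Using $\dot x = n u_0$ together with $\langle (x,a),(x,b)\rangle = \frac{1}{n}\tr(ab)$, one gets $\langle \dot x, u\rangle = \tr(u_0 u)$, so everything reduces to the two trace computations $\tr(u_0^2) = 2(X\cdot\dot X)^2 + 2X^2\dot X^2$ and $\tr(u_0 w) = 2X^2(X\cdot\dot X)$. The terms proportional to $(X\cdot\dot X)^2$ then cancel, leaving $\langle \dot x, u\rangle = 4\dot X^2$, whence $\|\dot x\|^2 = \frac{1}{n}\cdot nX^2\cdot 4\dot X^2 = 4X^2\dot X^2$ and $\frac{1}{2}\|\dot x\|^2 = 2X^2\dot X^2$, as required.

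The main obstacle is the inversion of $\bar L_x$: one must recognize that the relevant computation never leaves the span of $w$ and $u_0$ and that $L_x$ acts triangularly there, which is exactly what turns the otherwise opaque Jordan-algebra inverse into an explicit formula. Once this is in place, the rest is routine bookkeeping with the rank-one identities $X^{\mathrm t}X = X^2$ and $X^{\mathrm t}\dot X = X\cdot\dot X$.
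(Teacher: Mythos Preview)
Your argument is correct: the potential term is immediate, and your identification of $\bar L_x^{-1}(\dot x)$ inside $\mathrm{span}\{w,u_0\}$ via the triangular action of $L_x$ checks out line by line, as do the trace identities and the final cancellation. One small point worth making explicit is that even when $w$ and $u_0$ are linearly dependent (i.e.\ $\dot X\parallel X$) your ansatz still produces a valid preimage, and injectivity of $\bar L_x$ on $\mathrm{Range}\,L_x$ then guarantees it is the correct one.

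The paper takes a different route: rather than inverting $\bar L_x$ algebraically, it observes that both $L$ and $\tilde L$ are $\mathrm O(n)$-invariant and $\bar q$ is $\mathrm O(n)$-equivariant, so it suffices to verify the identity at the single point $X=[a,0,\dots,0]^{\mathrm t}$. There the matrices $XX^{\mathrm t}$, $\tfrac{\mrd}{\mrd t}(XX^{\mathrm t})$, and $\bar L_{XX^{\mathrm t}}^{-1}\bigl(\tfrac{\mrd}{\mrd t}(XX^{\mathrm t})\bigr)$ are written out explicitly in block form and the trace is read off. Your approach is coordinate-free and exhibits the mechanism (the rank-one identities force $L_x$ to be upper-triangular on the relevant plane), at the cost of a slightly longer computation; the paper's approach is shorter and more concrete but hides why the inverse is so simple behind the choice of a convenient frame.
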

\begin{proof}
It is clear that $\tr (XX^{\mathrm t})=X^2$. Thus it suffices to show that 
\begin{eqnarray}\label{velocityidentity}
\tr\left( {\mrd\over \mrd t}(XX^{\mathrm t})\; \bar L_{XX^{\mathrm t}}^{-1} \left( {\mrd\over \mrd t}(XX^{\mathrm t})\right)\right) = 4 \dot X^2.
\end{eqnarray}
Since $\bar q$ is $\mathrm{O}(n)$-equivariant, and both $L$ and $\tilde L$ are $\mathrm{O}(n)$-invariant, it suffices to verify the proposition at the point $X=[a, 0, \ldots, 0]^{\mathrm t}$ with $a>0$. Let $\dot X=[y_1, \ldots, y_n]^{\mathrm t}$. Then

\renewcommand{\arraystretch}{2}
$$\newcommand*{\temp}{\multicolumn{1}{r|}{}}
XX^{\mathrm t}=a^2\left[\begin{array}{ccccc}
1 &\temp & 0&\cdots &0 \\ \cline{1-5}
0 &\temp &0& \cdots & 0\\  
\vdots &\temp & \vdots&\ddots &\vdots\\
0 &\temp & 0&\cdots &0
\end{array}\right],
$$
\renewcommand{\arraystretch}{2}
$$\newcommand*{\temp}{\multicolumn{1}{r|}{}}
{\mrd\over \mrd t}(XX^{\mathrm t})=a\left[\begin{array}{ccccc}
2y_1 &\temp & y_2&\cdots &y_n \\ \cline{1-5}
y_2 &\temp &0& \cdots & 0\\  
\vdots &\temp & \vdots&\ddots &\vdots\\
y_n &\temp & 0&\cdots &0
\end{array}\right]
$$
and
\renewcommand{\arraystretch}{2}
$$\newcommand*{\temp}{\multicolumn{1}{r|}{}}
\bar L_{XX^{\mathrm t}}^{-1}\left({\mrd\over \mrd t}(XX^{\mathrm t})\right)={1\over a}\left[\begin{array}{ccccc}
2y_1 &\temp & 2y_2&\cdots &2y_n \\ \cline{1-5}
2y_2 &\temp &0& \cdots & 0\\  
\vdots &\temp & \vdots&\ddots &\vdots\\
2y_n &\temp & 0&\cdots &0
\end{array}\right].
$$
Then we have 
\[
\tr\left( {\mrd\over \mrd t}(XX^{\mathrm t})\; \bar L_{XX^{\mathrm t}}^{-1} \left( {\mrd\over \mrd t}(XX^{\mathrm t})\right)\right) =4\sum_{i=1}^n y_i^2 = 4\dot X^2.
\]
\end{proof}
\begin{rmk}
The dynamical problem with configuration space $\mathbb R^n_*$ and Lagrangian $\tilde L$ in Proposition \ref{Ltilde} is a conformal Kepler problem in the sense of T. Iwai \cite{Iwai1981}. 
\end{rmk}
In view of the fact that a smooth covering map is a local diffeomorphism, the following proposition, one of the two reasons for the success of Levi-Civita's approach to planar Kepler problem, is almost evident.
\begin{Prop}\label{covering}
Let $p$: $E\to X$ be a covering map from manifold $E$ onto manifold $X$, \\$L$: $TX\to \mathbb R$ be a smooth function, $\alpha$: $I\to X$ be a smooth map from interval $I$ to $X$. Then $\alpha$ is a solution to the Euler-Lagrange equation associated with Lagrangian $L$ if and only if any lifting $\tilde \alpha$ of $\alpha$ is a solution to the Euler-Lagrange equation associated with Lagrangian $L\circ Tp$. 
\end{Prop}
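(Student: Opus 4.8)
The plan is to exploit two facts: that the Euler--Lagrange equation is a purely local, second-order condition on a curve, and that a covering map is a local diffeomorphism. Together these reduce the claim to the \emph{diffeomorphism invariance} of the Euler--Lagrange equations, which I would establish through the action functional rather than through a coordinate computation.

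First I would record the invariance lemma. Let $\phi: M \to N$ be a diffeomorphism and $L: TN \to \mathbb{R}$ smooth, and put $\hat L = L \circ T\phi: TM \to \mathbb{R}$. For any curve $\beta$ in $M$ the chain rule gives $T\phi(\beta(t), \dot\beta(t)) = ((\phi\circ\beta)(t), \frac{d}{dt}(\phi\circ\beta)(t))$, so $\hat L(\beta, \dot\beta) = L(\phi\circ\beta, \frac{d}{dt}(\phi\circ\beta))$ pointwise. Integrating over any compact subinterval $[a,b]$ shows that the two action functionals agree, $S_{\hat L}[\beta] = S_L[\phi\circ\beta]$. Since $\beta \mapsto \phi\circ\beta$ is a bijection on curves that matches endpoint data and fixed-endpoint variations (because $\phi$ is a diffeomorphism), $\beta$ is a critical point of $S_{\hat L}$ if and only if $\phi\circ\beta$ is a critical point of $S_L$. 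As the Euler--Lagrange equation holds on $I$ precisely when the curve is critical on every compact subinterval, this yields the lemma: $\beta$ solves Euler--Lagrange for $\hat L$ iff $\phi\circ\beta$ solves it for $L$.

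With the lemma in hand I would globalize using the covering. Fix $t_0 \in I$. Because $p$ is a covering map, I can choose a neighborhood $\tilde U$ of $\tilde\alpha(t_0)$ in $E$ such that $\phi := p|_{\tilde U}: \tilde U \to U$ is a diffeomorphism onto an open set $U \subseteq X$. By continuity there is a subinterval $J \ni t_0$ with $\tilde\alpha(J) \subseteq \tilde U$, and the lifting relation $p\circ\tilde\alpha = \alpha$ forces $\tilde\alpha|_J = \phi^{-1}\circ(\alpha|_J)$. On $T\tilde U$ the restriction of $Tp$ equals $T\phi$, so $(L\circ Tp)|_{T\tilde U} = L|_{TU}\circ T\phi = \hat L$. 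Applying the invariance lemma to $\phi$ on $J$ shows that $\tilde\alpha$ solves the Euler--Lagrange equation for $L\circ Tp$ on $J$ iff $\alpha = \phi\circ\tilde\alpha$ solves it for $L$ on $J$. Since the Euler--Lagrange equation is local and $t_0 \in I$ was arbitrary, the equivalence holds on all of $I$.

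The only genuine content is the invariance lemma, so the main obstacle is being careful there: one must check that the correspondence $\beta \leftrightarrow \phi\circ\beta$ really carries admissible fixed-endpoint variations to admissible variations in both directions. This is exactly where the diffeomorphism hypothesis is needed, and it is precisely what the covering map supplies on each $\tilde U$, even though $p$ need not be a global diffeomorphism. Everything else is the routine bookkeeping of passing from the pointwise Euler--Lagrange condition to local criticality and back.
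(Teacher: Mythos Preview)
Your proposal is correct and follows exactly the idea the paper points to: the paper does not give a written proof of this proposition at all, merely remarking beforehand that it is ``almost evident'' from the fact that a smooth covering map is a local diffeomorphism. Your argument makes this precise by isolating the diffeomorphism-invariance lemma for the Euler--Lagrange equations (proved via the action functional) and then localizing along the covering, which is precisely the content the paper leaves implicit.
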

\begin{rmk}
The existence of such a lifting is almost the definition of covering space. 
\end{rmk}
In view of Proposition \ref{covering}, by finding all solutions to the Euler-Lagrange equation \cite[p. 58]{Arnold89} associated with Lagrange $\tilde L$ in Proposition \ref{Ltilde} and then composing them with $\bar q$, we get all solutions to the equation of motion for the $\mathrm{O}(1)$-Kepler problem at level $n$.

From Proposition \ref{Ltilde} one can see that the Euler-Lagrange equation associated with $\tilde L$ is
$\displaystyle{\mrd\over \mrd t}(4X^2\dot X)=\tilde E{2X\over X^2}$ or
\begin{eqnarray}\label{EqnOfM}
\left(X^2{\mrd\over \mrd t}\right)^2 X={\tilde E\over 2}\, X.
\end{eqnarray}

Let us fix a solution to the equation of motion for the $\mathrm{O}(1)$-Kepler problem at level $n$, then the total energy $E$ is a constant of motion, and so is $\tilde E=E$. With this in mind we can solve Eq. \eqref{EqnOfM} in three cases according as the total energy 
$E$ is negative, zero or positive. 

\subsection{The case $E<0$}
In this case we introduce variable
\[
\tau =\sqrt{-E\over 2}\displaystyle \int_0^t {\mrd\tilde t\over X(\tilde t)^2}.
\]
Then $\tau$ is an increasing smooth function of $t$ and Eq. \eqref{EqnOfM} becomes
\[
{\mrd^2X\over \mrd\tau^2}+X=0,
\] so $X$ is of the form
\begin{eqnarray}\label{solnegE}
X(t(\tau))=\cos \tau\, u+\sin \tau\, v
\end{eqnarray}  for some $u\in \mathbb R^n_*$ and $v\in \mathbb R^n$. Substituting this solution to equation $E =  2X^2\dot X^2-{1\over X^2}$, we get
\[
E=-{1\over u^2+v^2}.
\]
Although $\tau $ is a complicated function of $t$, $t$ can be quiet explicitly expressed as a function of $\tau$:
\begin{eqnarray}
t&=&\sqrt{2\over -E}\displaystyle \int_0^\tau X(t(\tilde\tau))^2 \, \mrd\tilde \tau\cr
&=&\sqrt{2\over -E}\displaystyle \int_0^\tau \left( \cos \tilde\tau\, u+\sin \tilde\tau\, v
   \right)^2 \, \mrd\tilde \tau\cr
   &=&\sqrt{2(u^2+v^2)}\left({u^2+v^2\over 2}\tau+{u^2-v^2\over 4}\sin(2\tau)+{u\cdot v\over 2}\left(1-\cos(2\tau)\right)\right).\nonumber
\end{eqnarray}

\subsection{The case $E=0$}
In this case we introduce variable
\[
\tau =\displaystyle \int_0^t {\mrd s\over X(s)^2}.
\]
Then $\tau$ is an increasing smooth function of $t$ and Eq. \eqref{EqnOfM} becomes
\[
{\mr d^2X\over \mrd\tau^2}=0,
\] so $X$ is of the form
\begin{eqnarray}\label{solnegE}
X(t(\tau))=u+\tau v
\end{eqnarray}  for some $u\in \mathbb R^n_*$ and $v\in \mathbb R^n$. Substituting this solution to equation $0=E =  2X^2\dot X^2-{1\over X^2}$, we get
$v^2={1\over 2}$. Again $t$ is a simple increasing function of $\tau$:
\[
t=u^2\tau+u\cdot v\, \tau^2+{1\over 6}\tau^3.
\]

\subsection{The case $E>0$}
In this case we introduce variable
\[
\tau =\sqrt{E\over 2}\displaystyle \int_0^t {\mrd s\over X(s)^2}.
\]
Then $\tau$ is an increasing smooth function of $t$ and Eq. \eqref{EqnOfM} becomes
\[
{\mr d^2X\over \mrd\tau^2}-X=0,
\] so $X$ is of the form
\begin{eqnarray}\label{solposE}
X(t(\tau))=\cosh \tau\, u+\sinh \tau\, v
\end{eqnarray} for some $u\in \mathbb R^n_*$ and $v\in \mathbb R^n$. Substituting this solution to equation $E =  2X^2\dot X^2-{1\over X^2}$, we get
\[
E={1\over v^2-u^2}.
\] Since $E>0$, we must have $v^2>u^2$ in solution \eqref{solposE}. Again, $t$ is a simple increasing function of $\tau$:
\begin{eqnarray}
t &=&\sqrt{2(v^2-u^2)}\left({u^2-v^2\over 2}\tau+{u^2+v^2\over 4}\sinh(2\tau)+{u\cdot v\over 2}\left(\cosh(2\tau)-1\right)\right).\nonumber
\end{eqnarray}

\vskip 10pt
The above analysis, when combined with Proposition \ref{covering}, yields all solutions to the equation of motion of the $\mr O(1)$-Kepler problem at level $n$, though the dependence on time $t$ is only implicitly given. Moreover, for any solution $X(t)$ to Eq. \eqref{EqnOfM} we have obtained above, one can check that the image of the map $t\mapsto q(X(t))$ always lies inside a real plane inside $\mathrm{H}_n(\mathbb R)$. Therefore, in combination with Proposition \ref{covering}, the above analysis implies

\begin{Thm} \label{trajectory}For the $\mathrm{O}(1)$-Kepler problem at level $n$, the following statements are true.

1) A trajectory is always the intersection of the space $\mathcal C_1$ with a real plane inside $\mathrm{H}_n(\mathbb R)$, and it is bounded or unbounded according as the total energy $E$ is negative or not.

2) A bounded trajectory can be parametrized as $\alpha(\tau)=q(\cos \tau\, u+\sin \tau\, v)$ for some $u\in \mathbb R_*^n$ and $v\in \mathbb R^n$. Moreover, any parametrized curve of this form is a bounded trajectory
with negative total energy $E=-{1\over u^2+v^2}$. 

3) An unbounded trajectory with zero total energy can be parametrized as $\alpha(\tau)=q(u+ v\tau)$ for some $u\in \mathbb R_*^n$ and $v\in \mathbb R^n$ with $v^2={1\over 2}$. Moreover, any parametrized curve of this form is a trajectory with zero total energy. 

4) An unbounded trajectory with positive total energy can be parametrized as $\alpha(\tau)=q(\cosh \tau\, u+\sinh \tau\, v)$ for some $u\in \mathbb R_*^n$ and $v\in \mathbb R^n$ with $v^2>u^2$. Moreover, any parametrized curve of this form is a trajectory with positive total energy $E={1\over v^2-u^2}$.

\end{Thm}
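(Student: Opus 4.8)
The plan is to extract all four assertions from the explicit solutions already in hand, the only new ingredient being an identification of each image curve $q(X(\tau))$ with a full intersection $\mathcal C_1\cap\Pi$. By Proposition \ref{covering} together with the three-case analysis preceding the statement, every trajectory is of the form $\tau\mapsto q(X(\tau))$ with $X(\tau)$ one of the three families, and conversely; so it suffices to analyse these families. First I would fix the plane $V=\operatorname{span}\{u,v\}\subseteq\mathbb R^n$ in which $X(\tau)$ moves, and work in the coordinates given by the (a priori three) symmetric matrices $uu^{\mathrm t}$, $vv^{\mathrm t}$, $uv^{\mathrm t}+vu^{\mathrm t}$ spanning a subspace $W\subseteq\mathrm H_n(\mathbb R)$. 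Writing a point of $\mathcal C_1\cap W$ as $q(Y)=nYY^{\mathrm t}$ with $Y=pu+qv$, its coordinates are $n(p^2,q^2,pq)$, so $\mathcal C_1\cap W$ is exactly the single nappe $\{c_1c_2=c_3^2,\ c_1,c_2\ge0\}$ of a quadratic cone.

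Next I would expand $q(X(\tau))=nX(\tau)X(\tau)^{\mathrm t}$ in each case. Using the double-angle identities, the coordinates become $n(\cos^2\tau,\sin^2\tau,\cos\tau\sin\tau)$, $n(1,\tau^2,\tau)$, and $n(\cosh^2\tau,\sinh^2\tau,\cosh\tau\sinh\tau)$ respectively. Each satisfies one affine relation, namely $c_1+c_2=n$, $c_1=n$, and $c_1-c_2=n$, so the image lies in a real affine $2$-plane $\Pi\subseteq W\subseteq\mathrm H_n(\mathbb R)$, which is the planarity claim in part~1. Intersecting the defining relation of $\Pi$ with the cone $c_1c_2=c_3^2$ then gives a bounded conic (an ellipse) when $c_1+c_2$ is fixed, a parabola when $c_1$ is fixed, and a single branch of a hyperbola (since $c_2\ge0$ selects one branch) when $c_1-c_2$ is fixed; this yields the conic types and the boundedness dichotomy of part~1.

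To upgrade containment to the equality $\text{trajectory}=\mathcal C_1\cap\Pi$ I would solve the relation defining $\Pi$ for a point $q(Y)=n(p^2,q^2,pq)$: it forces $p^2+q^2=1$, $p^2=1$, or $p^2-q^2=1$ respectively, and since $q$ identifies $Y$ with $-Y$ these loci are traversed by the parameter $\tau$ (periodic in the elliptic case, running over all of $\mathbb R$ otherwise). In particular the two algebraic branches of $p^2-q^2=1$ are negatives of one another and hence map to the \emph{same} curve, which is precisely why only one hyperbola branch appears. The energy values $E=-1/(u^2+v^2)$, $0$, $1/(v^2-u^2)$ and the constraints $v^2=\tfrac12$, $v^2>u^2$ were already computed from $\tilde E=2X^2\dot X^2-1/X^2$; for the converse halves of parts 2--4 I would check by direct substitution that each parametric family solves the linear ODE into which \eqref{EqnOfM} degenerates in the appropriate $\tau$-time, and then invoke Proposition \ref{covering} in the reverse direction.

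I expect the equality statement in part~1 to be the point requiring most care: showing the trajectory exhausts $\mathcal C_1\cap\Pi$ rather than a proper arc, and that the $E>0$ case produces a single hyperbola branch. The reduction to one nappe of the rank-one positive-semidefinite cone, together with the $2$-to-$1$ identification $q(Y)=q(-Y)$, is what makes both facts transparent. A final bookkeeping point is the degenerate subcase in which $u$ and $v$ are linearly dependent: there $V$ and $W$ collapse, the orbit is a collision orbit tracing a segment or ray, and the ``plane'' of part~1 is taken to be any $2$-plane containing it, the boundedness dichotomy remaining valid.
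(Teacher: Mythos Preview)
Your proposal is correct and follows the paper's own route: the paper's ``proof'' of Theorem~\ref{trajectory} is precisely the three-case analysis of Eq.~\eqref{EqnOfM} together with Proposition~\ref{covering}, capped by the remark that ``one can check that the image of the map $t\mapsto q(X(t))$ always lies inside a real plane inside $\mathrm H_n(\mathbb R)$''. You reproduce that skeleton and then supply what the paper leaves implicit, namely the explicit identification of the plane $\Pi\subset W=\operatorname{span}\{uu^{\mathrm t},vv^{\mathrm t},uv^{\mathrm t}+vu^{\mathrm t}\}$ via the affine relations $c_1+c_2=n$, $c_1=n$, $c_1-c_2=n$, and the verification that the image actually exhausts $\mathcal C_1\cap\Pi$ by solving for $(p,q)$ and using $q(Y)=q(-Y)$. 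Your identification of the conic types (ellipse, parabola, single hyperbola branch) already anticipates Theorem~\ref{non-colliding trajectories}; for Theorem~\ref{trajectory} itself only the bounded/unbounded dichotomy is needed, so this is extra but welcome detail. The one soft spot you flag yourself---the collinear $u,v$ case---is genuine: there the image is a half-open segment or ray in the line $\mathbb R\cdot uu^{\mathrm t}$, and no affine $2$-plane $\Pi$ gives literal equality $\text{trajectory}=\mathcal C_1\cap\Pi$ (any such $\Pi$ containing two points of the ray contains the whole ray). The paper's text, however, only argues containment in a plane, so your treatment is at least as careful as the original.
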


\section{Non-colliding trajectories}
The interesting trajectories are the non-colliding ones, i.e., the ones such that in their parametrization $\alpha(\tau)$ given in theorem \ref{trajectory}, $\alpha(\tau)\neq \mathbf 0\in \mathrm{H}_n(\mathbb R)$ for any $\tau\in\mathbb R$. It is evident that if $v$ is a scalar multiple of $u$ in theorem \ref{trajectory}, then $\alpha(\tau)=\mathbf 0$ for some finite value of $\tau$ and it is not hard to check that the converse is also true.  
Therefore, applied to non-colliding trajectories only, theorem \ref{trajectory}  becomes 

\begin{Thm}\label{non-colliding trajectories} For a non-colliding trajectory of the $\mathrm{O}(1)$-Kepler problem at level $n$, the following statements are true.

1) It is an ellipse, a parabola or a branch of hyperbola according as the total energy $E$ is negative, zero or positive.

(We assume in the next three statements that the variable $\tau$ runs over the entire $\mathbb R$.)

2) If it is an ellipse then it can be parametrized as $\alpha(\tau)=q(\cos \tau\, u+\sin \tau\, v)$ for some linearly independent $u, v\in \mathbb R^n$.  Moreover, any parametrized curve of this form is an elliptic trajectory with negative total energy $E=-{1\over u^2+v^2}$. 

3)  If it is a parabola then it can be parametrized as $\alpha(\tau)=q(u+ v\tau)$ for some linearly independent $u, v\in \mathbb R^n$. Moreover, any parametrized curve of this form is a parabolic trajectory with zero total energy. 

4)  If it is a branch of hyperbola then it can be parametrized as $\alpha(\tau)=q(\cosh \tau\, u+\sinh \tau\, v)$ for some linearly independent $u, v\in \mathbb R^n$ with $v^2>u^2$. Moreover, any parametrized curve of this form is a hyperbolic trajectory with positive total energy $E={1\over v^2-u^2}$. 
\end{Thm}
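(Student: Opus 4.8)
The plan is to carry the whole analysis into the three-dimensional subspace of $\mathrm H_n(\mathbb R)$ attached to the plane $\mathrm{span}(u,v)\subseteq\mathbb R^n$, and to read off the conic type from explicit coordinates there. First I would record the non-colliding criterion. Writing the relevant vector-valued curve as $\xi(\tau)$, so that $\alpha(\tau)=q(\xi(\tau))=n\,\xi(\tau)\xi(\tau)^{\mathrm t}$, we have $\alpha(\tau)=\mathbf 0$ if and only if $\xi(\tau)=\mathbf 0$. In each of the three cases $\xi(\tau)=a(\tau)\,u+b(\tau)\,v$ with coefficient pair $(a,b)$ equal to $(\cos\tau,\sin\tau)$, $(1,\tau)$, or $(\cosh\tau,\sinh\tau)$, and in all three cases $(a,b)$ never vanishes. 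Hence, when $u$ and $v$ are linearly independent, $\xi(\tau)$ is never $\mathbf 0$; and when they are dependent one checks directly that $\xi(\tau)=\mathbf 0$ for a suitable $\tau$ (as noted in the text, using $v\neq\mathbf 0$ in each case). This identifies the non-colliding trajectories with the linearly independent $(u,v)$ and reduces everything to that case.

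Next I would introduce $P=uu^{\mathrm t}$, $Q=uv^{\mathrm t}+vu^{\mathrm t}$, $R=vv^{\mathrm t}$. Expanding $\xi\xi^{\mathrm t}$ for $\xi=au+bv$ gives $\xi\xi^{\mathrm t}=a^2P+abQ+b^2R$, so $\alpha(\tau)$ always lies in $W:=\mathrm{span}(P,Q,R)$. A short lemma is needed: $P,Q,R$ are linearly independent whenever $u,v$ are. I would prove it by evaluating the quadratic form $x\mapsto x^{\mathrm t}(\lambda P+\mu Q+\nu R)x=\lambda(u\cdot x)^2+2\mu(u\cdot x)(v\cdot x)+\nu(v\cdot x)^2$ at vectors $x$ realizing $(u\cdot x,v\cdot x)=(1,0),(0,1),(1,1)$, which are available because $u,v$ are independent and which force $\lambda=\mu=\nu=0$. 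With this, letting $(A,B,C)$ denote the coordinates of $n^{-1}\alpha(\tau)$ in the basis $P,Q,R$ gives genuine coordinates on $W$, and the trajectory is the affine image of the coordinate curve $\tau\mapsto(A,B,C)=(a^2,ab,b^2)$.

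The heart of the matter is then the explicit coordinate curve in each case, obtained through the (hyperbolic) double-angle identities. For $E<0$ one gets $(A,B,C)=(\cos^2\tau,\cos\tau\sin\tau,\sin^2\tau)$, hence the constant relation $A+C=1$ together with $(A-C)^2+4B^2=1$: an ellipse in the plane $A+C=1$. For $E=0$ one gets $(A,B,C)=(1,\tau,\tau^2)$, i.e.\ the plane $A=1$ and the parabola $C=B^2$. For $E>0$ one gets $(A,B,C)=(\cosh^2\tau,\cosh\tau\sinh\tau,\sinh^2\tau)$, hence $A-C=1$ and $(A+C)^2-4B^2=1$; since $A+C=\cosh 2\tau\ge 1$ this is exactly one branch of a hyperbola, traced once as $\tau$ runs over $\mathbb R$. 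Because $P,Q,R$ are independent, the defining plane is genuinely two-dimensional and the injective linear map $(A,B,C)\mapsto n(AP+BQ+CR)$ carries each coordinate conic to a congruent conic in $\mathrm H_n(\mathbb R)$; this both proves the asserted conic types and recovers the real plane of Theorem \ref{trajectory}. The converse statements and the energy values are then inherited directly from Theorem \ref{trajectory}.

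The step I expect to be the only real obstacle, modest though it is, is the non-degeneracy: ensuring that the coordinate curve is an honest conic rather than a collapsed one. This is exactly the linear independence of $P,Q,R$, without which ``ellipse'', ``parabola'', and ``branch of hyperbola'' would not be literally justified. Everything past that point is a routine translation through the double-angle and hyperbolic double-angle identities.
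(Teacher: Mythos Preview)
Your argument is correct and in fact more complete than the paper's. The paper derives Theorem~\ref{non-colliding trajectories} from Theorem~\ref{trajectory} in a single sentence: it records the observation that $\alpha(\tau)=\mathbf 0$ for some $\tau$ if and only if $v$ is a scalar multiple of $u$, and then declares the result to be Theorem~\ref{trajectory} ``applied to non-colliding trajectories only''. The actual identification of the conic type is left implicit, resting on the remark in the introduction that a plane section of the quadric $\mathcal C_1$ is a quadratic plane curve, together with the bounded/unbounded dichotomy of Theorem~\ref{trajectory}. You instead introduce the basis $P=uu^{\mathrm t}$, $Q=uv^{\mathrm t}+vu^{\mathrm t}$, $R=vv^{\mathrm t}$ of the relevant three-dimensional subspace, prove its linear independence (a point the paper does not address), and read off the conic equations explicitly via double-angle identities. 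This buys you a self-contained verification of each conic type, including the distinction between the $E=0$ parabolic and $E>0$ hyperbolic cases that the paper's boundedness argument alone does not separate.

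One small wording issue: an injective linear map sends a conic to a conic of the same affine type, not in general to a \emph{congruent} conic. What you are using is that the ellipse/parabola/branch-of-hyperbola classification is an affine invariant, which is exactly what the injectivity of $(A,B,C)\mapsto n(AP+BQ+CR)$ gives you.
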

Note that, in statement 3) of Theorem \ref{non-colliding trajectories} the condition $v^2={1\over 2}$ is no longer needed because one can rescale $v$ due to the fact that $\tau\in \mathbb R$. Let $\mathrm{GL}(n, \mathbb R)/\mathrm{O}(1)$ be the quotient group of $\mathrm{GL}(n, \mathbb R)$ by the normal subgroup consisting of the identity matrix of order $n$ and its negative. Since the standard linear action of $\mathrm{GL}(n, \mathbb R)$ on $\mathbb R^n$ ($n\ge 2$) acts transitively on the set of linearly independent pairs of vectors in $\mathbb R^n$, Theorem \ref{non-colliding trajectories} implies the following
\begin{Thm}
For the $\mathrm{O}(1)$-Kepler problem at level $n$, the group $\mathrm{GL}(n, \mathbb R)/\mathrm{O}(1)$ acts transitively on both set of oriented elliptic  trajectories and the set of oriented parabolic trajectories. 
\end{Thm}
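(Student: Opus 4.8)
The plan is to realize the asserted transitive action concretely as the congruence action of $\mathrm{GL}(n,\mathbb R)$ on $\mathrm H_n(\mathbb R)$ and to transport it through the covering map $q$. For $g\in\mathrm{GL}(n,\mathbb R)$ I set $\rho(g)\colon A\mapsto gAg^{\mathrm t}$; this is a left action since $\rho(gh)=\rho(g)\rho(h)$. First I would check that $\rho(g)$ preserves $\mathcal C_1$: every element of $\mathcal C_1$ has the form $q(X)=nXX^{\mathrm t}$ with $X\neq\mathbf 0$, and invertibility of $g$ keeps such a matrix rank-one, positive semi-definite, and of positive trace. The decisive observation is the equivariance of the quadratic map,
\[
q(gX)=n(gX)(gX)^{\mathrm t}=g\,(nXX^{\mathrm t})\,g^{\mathrm t}=\rho(g)\,q(X),
\]
so $q$ intertwines the linear action of $\mathrm{GL}(n,\mathbb R)$ on $\mathbb R^n$ with the congruence action $\rho$ on $\mathcal C_1$.

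Next I would observe that the action factors through $\mathrm{GL}(n,\mathbb R)/\mathrm{O}(1)$. The subgroup $\mathrm{O}(1)=\{\pm I\}$ acts on $\mathbb R^n$ by $X\mapsto\pm X$, and since $q$ is even, $q(-X)=q(X)$; equivalently $\rho(-I)=\mathrm{id}$. Hence $-I$ acts trivially on $\mathcal C_1$ and the congruence action descends to a well-defined action of $\mathrm{GL}(n,\mathbb R)/\mathrm{O}(1)$ on $\mathcal C_1$, and thereby on (oriented) curves in $\mathcal C_1$; well-definedness on oriented trajectories is automatic because each $\rho(g)$ is a diffeomorphism and so respects orientation-preserving reparametrization.

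The core of the argument is to show that $\rho(g)$ carries oriented elliptic (resp. parabolic) trajectories to oriented elliptic (resp. parabolic) trajectories. Using the parametrizations of Theorem \ref{non-colliding trajectories} together with the equivariance above,
\[
\rho(g)\bigl(\alpha(\tau)\bigr)=\rho(g)\,q(\cos\tau\,u+\sin\tau\,v)=q(\cos\tau\,gu+\sin\tau\,gv),
\]
which is again an elliptic trajectory, now with parameter pair $(gu,gv)$; the identical computation with $u+\tau v$ settles the parabolic case. Since $g$ is invertible it sends the linearly independent pair $(u,v)$ to the linearly independent pair $(gu,gv)$, so the image lies in the admissible family, and since the curve parameter $\tau$ is untouched the orientation (the direction of increasing $\tau$, hence of increasing $t$) is preserved. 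Finally, given two oriented elliptic trajectories with parameter pairs $(u_1,v_1)$ and $(u_2,v_2)$, transitivity of the linear $\mathrm{GL}(n,\mathbb R)$-action on linearly independent pairs in $\mathbb R^n$ ($n\ge 2$) furnishes $g$ with $gu_1=u_2$ and $gv_1=v_2$; then $\rho(g)$ maps the first oriented trajectory onto the second. The parabolic case is identical.

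The step I expect to require the most care is the claim that $\rho(g)$ actually sends the admissible set of trajectories into itself, because this is precisely where the elliptic and parabolic cases part ways with the hyperbolic one. For ellipses and parabolas \emph{every} linearly independent pair $(u,v)$ yields an admissible trajectory, so $\rho(g)$ cannot leave the family; for hyperbolas the extra constraint $v^2>u^2$ is not invariant under a general $g\in\mathrm{GL}(n,\mathbb R)$, the image $(gu,gv)$ may violate the energy condition, and transitivity genuinely fails — which is exactly why the theorem is stated only for the elliptic and parabolic trajectories. I would therefore be careful to invoke the unconstrained parametrizations in statements 2) and 3) of Theorem \ref{non-colliding trajectories}, rather than those of Theorem \ref{trajectory}, where the normalization $v^2={1\over 2}$ in the parabolic case would otherwise be disturbed by $g$.
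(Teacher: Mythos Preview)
Your proposal is correct and follows the same idea as the paper: invoke the parametrizations of Theorem \ref{non-colliding trajectories} and the transitivity of $\mathrm{GL}(n,\mathbb R)$ on linearly independent pairs in $\mathbb R^n$. The paper's own argument is a single sentence to that effect; you have supplied what the paper leaves implicit, namely the explicit description of the action as congruence $A\mapsto gAg^{\mathrm t}$, the equivariance $q(gX)=\rho(g)q(X)$, the passage to the quotient by $\{\pm I\}$, and the reason the hyperbolic case is excluded. None of this diverges from the paper's route---it simply fills it in.
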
 
This theorem is a direct analogue of parts 3) and 4) in Theorem 2 of Ref. \cite{meng2012}.  Note that, when and only when $n$ is odd, $\mathrm{GL}(n, \mathbb R)/\mathrm{O}(1)$ is the orientation-preserving linear automorphism group of $\mathbb R^n$.


\begin{thebibliography}{99}

\bibitem{meng2011}
 G. W. Meng, ``Generalized Kepler Problems I: Without Magnetic Charge", {\em J. Math. Phys.} {\bf 54}, 012109(2013).
 
 \bibitem{meng2008}
 G. W. Meng, ``The ${\mathrm O}(1)$-Kepler Problems", {\em J. Math. Phys.} {\bf 49}, 102111 (2008).
 
 \bibitem{meng2014}
G. W. Meng, ``The Universal Kepler Problem", {\em J. Geom. Symm. Phys.} {\bf 36} (2014) 47-57.
 
\bibitem{Levi-Civita1920}
 T. Levi-Civita,  ``Sur la r\'{e}gularisation du probl\`{e}me des trois corps", {\em Acta Math.} {\bf 42} (1920), no. 1, 99-144.

\bibitem{Iwai1981}
T. Iwai, ``On a `conformal' Kepler problem and its reduction", {\em  J. Math. Phys.} {\bf 22} (1981), no. 8, 1633-1639. 

\bibitem{Arnold89}
V. I. Arnold, {\em Mathematical methods of classical mechanics}, Springer, second Edition, 1989.

\bibitem{meng2012}
G. W. Meng, ``Lorentz group and oriented McIntosh-Cisneros-Zwanziger-Kepler orbits", {\em J. Math. Phys.} {\bf 53}, 052901 (2012).


\end{thebibliography}
\end{document}